\newtheorem{thm}{Theorem}
\newtheorem{lem}{Lemma}
\newtheorem{definition}{Definition}
\newtheorem{algorithm}{Algorithm}
\newcommand{\A}{{\bf A}}
\def\x{{\bf x}}
\def\y{{\bf y}}
\def\w{{\bf w}}
\def\z{{\bf z}}
\newcommand{\beq}{\begin{equation}}
\newcommand{\eeq}{\end{equation}}
\newcommand{\bea}{\begin{eqnarray}}
\newcommand{\eea}{\end{eqnarray}}
\newcommand{\Prob}{\ensuremath{\mathbb{P}}}
\long\def\symbolfootnote[#1]#2{\begingroup%
\def\thefootnote{\fnsymbol{footnote}}\footnote[#1]{#2}\endgroup}
\begin{document}
%
\title{On the Scaling Law for Compressive Sensing and its Applications}


\author{\authorblockN{Weiyu Xu}
\authorblockA{Cornell University \\
 Ithaca NY 14853, USA \\
 Email: wx42@cornell.edu}%
\and
\authorblockN{Ao Tang}
\authorblockA{Cornell University\\
Ithaca NY 14853, USA \\
Email: atang@ece.cornell.edu}

}


%


\maketitle

\begin{abstract}
\boldmath
$\ell_1$ minimization can be used to recover sufficiently sparse unknown signals from compressed linear measurements. In fact, exact thresholds on the sparsity (the size of the support set), under which with high probability  a sparse signal can be
recovered from i.i.d. Gaussian measurements, have been computed and are
referred to as ``weak thresholds'' \cite{D}. It was also known that there is a tradeoff between the sparsity and the $\ell_1$ minimization recovery stability. In this paper, we give a \emph{closed-form} characterization for this tradeoff which we call the scaling law for
compressive sensing recovery stability. In a nutshell, we are able to show that as the sparsity backs off $\varpi$ ($0<\varpi<1$) from the weak threshold of $\ell_1$ recovery, the parameter for the recovery stability will scale as $\frac{1}{\sqrt{1-\varpi}}$.
Our result is based on a careful analysis through the Grassmann angle framework for the Gaussian measurement matrix. We will further discuss how this scaling law helps in analyzing the iterative reweighted $\ell_1$ minimization algorithms. If the nonzero elements over the signal support follow an amplitude probability density function (pdf) $f(\cdot)$ whose $t$-th derivative $f^{t}(0) \neq 0$ for some integer $t \geq 0$, then a certain iterative reweighted $\ell_1$ minimization algorithm can be analytically shown to lift the phase transition thresholds (weak thresholds) of the plain $\ell_1$ minimization algorithm.
\end{abstract}
\IEEEpeerreviewmaketitle

\section{Introduction}
\label{sec:Intro}
Compressive sensing addresses the problem of recovering sparse signals
from under-determined systems of linear equations  \cite{Rice}. In
particular, if $\x$ is an $n\times1$ real-numbered vector that is known to have at
most $k$ nonzero elements where $k<n$, and $\A$ is an $m\times n $
measurement matrix with $k<m<n$, then for appropriate values of $k$,
$m$ and $n$, it is possible to efficiently recover $\x$ from $\y=\A\x$ \cite{D
  CS,DT,CT,Baraniuk Manifolds}. The most well recognized powerful recovery
algorithm is $\ell_1$ minimization which can be formulated  as
follows:
\beq
\label{eq:l1 min}
\min_{\A\z=\A\x}\|\z\|_1
\eeq
The first result that established the fundamental phase transitions of signal
recovery using $\ell_1$ minimization is due to Donoho and Tanner
\cite{DT,D}, where it was shown
that if the measurement matrix is i.i.d. Gaussian, for a given ratio
of $\delta = \frac{m}{n}$, $\ell_1$ minimization can successfully
recover {\em every} $k$-sparse signal, provided that $\mu = \frac{k}{n}$ is
smaller that a certain threshold. This statement is true
asymptotically as $n \rightarrow \infty$ and with high
probability. This threshold guarantees the recovery of {\em all}
sufficiently sparse signals and is therefore referred to as a ``strong''
threshold. It therefore does not depend on the actual distribution of
the nonzero entries of the sparse signal and thus is a universal
result. 

Another notion introduced and computed in \cite{DT,D} is that of a
{\em weak} threshold $\mu_{W}(\delta)$ under which signal recovery is guaranteed for {\em
  almost all} support sets and {\em almost all} sign patterns of the
sparse signal, with high probability as $n\rightarrow\infty$. The weak
threshold is the one that can be observed in simulations of $\ell_1$
minimization and allows for signal recovery beyond the strong threshold. It
is also universal in the sense that it applies to any amplitude that the nonzero signal entries take.

When the sparsity of the signal $\x$ is larger than the weak threshold $\mu_{W}(\delta)n$, a common stability result for the $\ell_1$ minimization is that, for a set $K \subseteq \{1,2, ..., n\}$ with cardinality $|K|$ small enough for $A$ to satisfy the restrict isometry condition \cite{CT} or the null space robustness property \cite{devore2} \cite{XuHassibiAllerton08},  the decoding error is bounded by,
\begin{equation}
\|\x-\hat{\x}\|_1 \leq D \|\x_{\overline{K}}\|_1,
\label{eq:generalstability}
\end{equation}
where $\hat{\x}$ is any minimizer to $\ell_1$ minimization, $D$ is a constant, $\overline{K}$ is the complement of the set $K$ and $\x_{\overline{K}}$ is the part of $\x$ over the set $\overline{K}$.

To date, known bounds on $|K|/n$, for the restricted isometry condition to hold with overwhelming probability, are small compared with the weak threshold $\mu_{W}(\delta)$ \cite{CT}. \cite{isitrobust} \cite{XuHassibiAllerton08} used the Grassmann angle approach to characterize sharp bounds on the stability of $\ell_1$ minimization and showed that, for an arbitrarily small $\epsilon_{0}$, as long as $|K|/n=(1-\epsilon_{0})\mu_{W}(\delta)n$, with overwhelming probability as $n \rightarrow \infty$, (\ref{eq:generalstability}) holds for some constant $D$ ($D$ of course depends on $|K|/n$). However, no closed-form formula for $D$ were given.

In this paper, we give a \emph{closed-form} characterization for this tradeoff which we call the scaling law for compressive sensing recovery stability. Namely, we will give a closed-form bound for $D$ as a function of $|K|/n$. It is the first result of such kind.
This result is obtained from close analysis through the Grassmann angle framework for the Gaussian measurement matrix. We will further discuss how this scaling law helps in analyzing the iterative reweighted $\ell_1$ minimization algorithm.

Using this scaling law results for the stability and the Grassmann angle framework for the weighted $\ell_1$ minimization, we prove that a certain \emph{iterative reweighted $\ell_1$} algorithm indeed has better weak recovery guarantees for particular classes of sparse signals, including sparse Gaussian signals. We previously introduced these algorithms in \cite{Icassp reweighted l_1}, and had proven that for a very restricted class of sparse signals they outperform standard $\ell_1$ minimization. In this paper, we are able to extend this result to a much wider and more reasonable class of sparse signals. The key to our result is the fact that for these signals, $\ell_1$ minimization has an \emph{approximate support recovery} property \cite{Isit reweighted l_1} which can be exploited by a reweighted $\ell_1$ algorithm, to obtain a provably superior weak threshold.
More specifically, if the nonzero elements over the signal support follow a probability density function (pdf) $f(\cdot)$ whose $t$-th derivative $f^{t}(0) \neq 0$ for some $t \geq 0$, then a certain iterative reweighted $\ell_1$ minimization algorithm can be analytically shown to lift the phase transition thresholds (weak thresholds) of the plain $\ell_1$ minimization algorithm through using the scaling law for the sparse recovery stability.
This extends our earlier results of weak threshold improvements for sparse vectors with nonzero elements following the Gaussian distribution, whose pdf is itself nonzero at the origin (namely its $0$-th derivative is nonzero) \cite{Isit reweighted l_1}.

It is worth noting that different variations of reweighted $\ell_1$
algorithms have been recently introduced in the literature and, have
shown experimental improvement over ordinary $\ell_1$ minimization
\cite{CWB07,Needell}.  In \cite{Needell} approximately sparse signals
have been considered, where perfect recovery is never
possible. However, it has been shown that the recovery noise can be
reduced using an iterative scheme. In \cite{CWB07}, a similar
algorithm is suggested and is empirically shown to outperform $\ell_1$
minimization for exactly sparse signals with non-flat
distributions. Unfortunately, \cite{CWB07} provides no theoretical
performance guarantee. 

This paper is organized as follows. In Section \ref{sec:basic} and \ref{sec:model}, we introduce the basic concepts and system model. In Section \ref{sec:scaling}, we introduce and derive the main result of this paper: the scaling law for the compressive sensing recovery stability. In the following sections, we will use the scaling law to give new analysis results about the iterative reweighted $\ell_1$ minimization algorithms. 

\hfill 
\section{Basic Definitions}
\label{sec:basic}
 A sparse signal with exactly $k$ nonzero entries is called
 $k$-sparse. For a vector $\x$, $\|\x\|_1$ denotes the $\ell_1$
 norm. The support (set) of $\x$,  denoted by $supp(\x)$, is the index
 set of its nonzero coordinates. For a vector $\x$ that is not exactly
 $k$-sparse, we define the $k$-support of $\x$ to be the index set of
 the largest $k$ entries of $\x$ in amplitude, and denote it by
 $supp_k(\x)$. For a subset $K$ of the entries of $\x$, $\x_K$ means
 the vector formed by those entries of $\x$ indexed in $K$. Finally,
 $\max|\x|$ and $\min|\x|$ mean the absolute value of the maximum and
 minimum entry of $\x$ in magnitude, respectively.

\section{Signal Model and Problem Description}
\label{sec:model}
We consider sparse random signals with i.i.d. nonzero
entries. In other words we assume that the unknown sparse signal is an
$n\times 1$ vector $\x$ with exactly $k$ nonzero entries, where each
nonzero entry is independently sampled from a well defined distribution. The measurement matrix $\A$ is a $m\times n$
matrix with i.i.d. Gaussian entries with a ratio of dimensions $\delta
= \frac{m}{n}$. Compressed sensing theory guarantees that if
$\mu=\frac{k}{n}$ is smaller than a certain threshold, then every
$k$-sparse signal can be recovered using $\ell_1$ minimization. The
relationship between $\delta$ and the maximum threshold of $\mu$ for
which such a guarantee exists is called the \emph{strong sparsity
  threshold}, and is denoted by $\mu_{S}(\delta)$. A more practical
performance guarantee is the so-called \emph{weak sparsity threshold},
denoted by $\mu_{W}(\delta)$, and has the following
interpretation. For a fixed value of $\delta = \frac{m}{n}$ and
i.i.d. Gaussian matrix $\A$ of size $m\times n$,  a random $k$-sparse
vector $\x$ of size $n\times 1$ with a randomly chosen support set and
a random sign pattern can be recovered from $\A\x$ using $\ell_1$
minimization with high probability, if $\frac{k}{n}<\mu_{W}(\delta)$. Similar recovery thresholds can be
obtained by imposing more or less restrictions. For example, strong
and weak thresholds for nonnegative signals have been evaluated in
\cite{Donoho positive}.

We assume that the support size of $\x$, namely $k$, is slightly
larger than the weak threshold of $\ell_1$ minimization. In other
words,  $k = (1+\epsilon_0)\mu_{W}(\delta)$ for some
$\epsilon_0>0$. This means that if we use $\ell_1$ minimization, a
randomly chosen $\mu_{W}(\delta)n$-sparse signal will be recovered
perfectly with very high probability, whereas a randomly selected
$k$-sparse signal will not. We would like to show that for a strictly
positive $\epsilon_0$, the iterative reweighted $\ell_1$ algorithm of
Section \ref{sec:Algorithm} can indeed recover a randomly selected
$k$-sparse signal with high probability, which means that it has an
improved weak threshold.

\section{The Scaling Law for the Compressive Sensing Stability}
\label{sec:scaling}
In this section, we will derive the scaling of the $\ell_1$ recovery stability as a function of the signal sparsity. More specifically, we are interested in characterizing a closed-form relationship between $C$ and the sparsity $|K|$ in the following theorem.

\begin{thm}
Let $A$ be a general $m\times n$ measurement matrix, $\x$ be an $n$-element
vector and $\y=A\x$. Denote $K$ as a subset of $\{1,2,\dots,n\}$ such
that its cardinality $|K|=k$ and further denote $\overline{K}=\{1,2,\dots,n\}\setminus K$. Let $\w$
denote an $n \times 1$ vector. Let $C>1$ be a fixed number.

Given a specific set $K$ and suppose that the part of $\x$ on $K$, namely $\x_{K}$ is fixed.
$\forall \x_{\overline{K}}$, any solution $\hat{\x}$ produced by the $\ell_1$ minimization
satisfies
\begin{equation*}
 \|\x_K\|_1-\|\hat{\x}_K\|_{1} \leq
\frac{2}{C-1} \|\x_{\overline{K}}\|_1
\end{equation*}
 and
\begin{equation*}
 \|(\x-\hat{\x})_{\overline{K}}\|_1
\leq \frac{2C}{C-1} \|\x_{\overline{K}}\|_1,
\end{equation*}
if and only if $\forall \w\in \mathbb{R}^n~\mbox{such that}~A\w=0$,
we have
 \begin{equation}
\|\x_K+\w_{K}\|_1+ \|\frac{\w_{\overline{K}}}{C}\|_1 \geq \|\x_K\|_1.
 \label{eq:Grasswthmeq1}
 \end{equation}
\end{thm}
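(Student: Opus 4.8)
The plan is to prove the two implications separately. The ``if'' direction --- (\ref{eq:Grasswthmeq1}) forces the two stability inequalities --- is routine; the ``only if'' direction is where the work lies, and I would prove it by contraposition, manufacturing a worst-case signal out of a null-space vector that violates (\ref{eq:Grasswthmeq1}).

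\emph{Sufficiency.} Assume (\ref{eq:Grasswthmeq1}) holds for every $\w$ with $A\w=0$. Fix $\x_{\overline{K}}$, let $\hat{\x}$ be any $\ell_1$ minimizer, and put $\w:=\hat{\x}-\x$, so $A\w=0$. Optimality $\|\x+\w\|_1\le\|\x\|_1$ reads
\[
\|\x_K+\w_K\|_1+\|\x_{\overline{K}}+\w_{\overline{K}}\|_1\le\|\x_K\|_1+\|\x_{\overline{K}}\|_1 ,
\]
and, since $\|\x_{\overline{K}}+\w_{\overline{K}}\|_1\ge\|\w_{\overline{K}}\|_1-\|\x_{\overline{K}}\|_1$, this yields $\|\x_K+\w_K\|_1+\|\w_{\overline{K}}\|_1\le\|\x_K\|_1+2\|\x_{\overline{K}}\|_1$. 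Condition (\ref{eq:Grasswthmeq1}) applied to $\w$ is just $\|\x_K\|_1-\|\x_K+\w_K\|_1\le\frac1C\|\w_{\overline{K}}\|_1$; combining the last two displays gives $(1-\frac1C)\|\w_{\overline{K}}\|_1\le 2\|\x_{\overline{K}}\|_1$, i.e. $\|(\x-\hat{\x})_{\overline{K}}\|_1=\|\w_{\overline{K}}\|_1\le\frac{2C}{C-1}\|\x_{\overline{K}}\|_1$, the second inequality. Feeding this back into (\ref{eq:Grasswthmeq1}) for $\w$ gives $\|\x_K\|_1-\|\hat{\x}_K\|_1=\|\x_K\|_1-\|\x_K+\w_K\|_1\le\frac1C\|\w_{\overline{K}}\|_1\le\frac{2}{C-1}\|\x_{\overline{K}}\|_1$, the first inequality.

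\emph{Necessity.} Suppose (\ref{eq:Grasswthmeq1}) fails for some null-space vector. The auxiliary program $\min\{\|\x_K+\w_K\|_1+\frac1C\|\w_{\overline{K}}\|_1:A\w=0\}$ is convex and coercive on the null space, hence attains its value $h^{*}$ at some $\w^{\dagger}$; and $h^{*}<\|\x_K\|_1$ because (\ref{eq:Grasswthmeq1}) is violated while $\w=0$ is feasible. Take $\x_{\overline{K}}:=-\w^{\dagger}_{\overline{K}}$, so that $\hat{\x}:=\x+\w^{\dagger}$ is supported on $K$, equals $\x_K+\w^{\dagger}_K$ there, and satisfies $A\hat{\x}=A\x$. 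The crucial point is that $\hat{\x}$ is then an $\ell_1$ minimizer for the data $A\x$: for any $\v$ with $A\v=0$, optimality of $\w^{\dagger}$ applied at $\w^{\dagger}+\v$ together with the triangle inequality gives $\|\x_K+\w^{\dagger}_K+\v_K\|_1\ge\|\x_K+\w^{\dagger}_K\|_1-\frac1C\|\v_{\overline{K}}\|_1$, so $\|\hat{\x}+\v\|_1=\|\x_K+\w^{\dagger}_K+\v_K\|_1+\|\v_{\overline{K}}\|_1\ge\|\hat{\x}\|_1+(1-\tfrac1C)\|\v_{\overline{K}}\|_1\ge\|\hat{\x}\|_1$ because $C>1$. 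For this minimizer $\hat{\x}_{\overline{K}}=0$, whereas $\|\x_K\|_1-\|\hat{\x}_K\|_1=\big(\|\x_K\|_1-h^{*}\big)+\frac1C\|\w^{\dagger}_{\overline{K}}\|_1>0$, the positive gap $\|\x_K\|_1-h^{*}$ quantifying the failure of (\ref{eq:Grasswthmeq1}); this value must be shown to exceed $\frac{2}{C-1}\|\x_{\overline{K}}\|_1=\frac{2}{C-1}\|\w^{\dagger}_{\overline{K}}\|_1$, contradicting the first stability inequality.

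\emph{Where the difficulty sits.} The construction above breaks the first inequality outright only when the violation margin $\|\x_K\|_1-h^{*}$ is large relative to $\|\w^{\dagger}_{\overline{K}}\|_1$, so the main obstacle is making the fixed constants come out right for an arbitrarily small violation. I would handle this by treating separately the case in which exact $\ell_1$ recovery of $\x_K$ itself already fails --- there $\x_{\overline{K}}=0$ immediately violates the first inequality, or the second one if some minimizer leaks mass onto $\overline{K}$ --- and, in the complementary case, rescaling $\w^{\dagger}$ and tuning the magnitude of the injected perturbation $\x_{\overline{K}}$ (and possibly playing several choices of $\x_{\overline{K}}$ against one another, since the bounds are demanded for \emph{all} $\x_{\overline{K}}$) so that $\frac{2}{C-1}\|\x_{\overline{K}}\|_1$ is strictly exceeded. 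I expect this constant-chasing, not any conceptual gap, to be the hard part; the Grassmann-angle apparatus is not needed for the equivalence itself, only afterwards to evaluate $\mu_W(\delta)$ and the attainable value of $C$.
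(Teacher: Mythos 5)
The paper itself states this theorem without proof (it is imported from \cite{isitrobust}, \cite{XuHassibiAllerton08}), so your argument has to stand on its own. Your \emph{sufficiency} half does: setting $\w=\hat{\x}-\x$, using optimality plus the reverse triangle inequality on $\overline{K}$, and then feeding (\ref{eq:Grasswthmeq1}) into the resulting bound is exactly the right computation, and it yields both constants $\frac{2}{C-1}$ and $\frac{2C}{C-1}$ cleanly. This is also the only direction the paper ever uses downstream (``if (\ref{eq:Grasswthmeq1}) is satisfied, we will have the stability result'').

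The \emph{necessity} half, however, has a gap that you correctly locate but misdiagnose as ``constant-chasing.'' It is not closable with the stated constants: the construction $\x_{\overline{K}}=-\w_{\overline{K}}$ (or $-\w^{\dagger}_{\overline{K}}$) only forces $\|\x_K\|_1-\|\hat{\x}_K\|_1>\frac{1}{C}\|\x_{\overline{K}}\|_1$, and since $\frac{1}{C}<\frac{2}{C-1}$ this need not contradict the first inequality. A concrete obstruction: take $n=2$, $m=1$, $K=\{1\}$, $\x_K=1$, $C=3$, and let the null space of $A$ be spanned by $(-1,2)$. Then $\w=(-1,2)$ gives $\|\x_K+\w_K\|_1+\frac{1}{3}\|\w_{\overline{K}}\|_1=\frac{2}{3}<1$, so (\ref{eq:Grasswthmeq1}) fails; yet a direct check over all $\x_{\overline{K}}=x_2\in\mathbb{R}$ shows the (unique) $\ell_1$ minimizer is $\hat{\x}=(1-x_2/2,\,0)$ for $-2\le x_2\le 0$ (and analogously elsewhere), so that $\|\x_K\|_1-\|\hat{\x}_K\|_1\le\frac{1}{2}|x_2|\le\frac{2}{C-1}|x_2|$ and $\|(\x-\hat{\x})_{\overline{K}}\|_1=|x_2|\le\frac{2C}{C-1}|x_2|$ always hold. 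Even your refined choice $\w^{\dagger}$ (here $t=1$, $x_2=-2$, $\hat{\x}=(0,0)$) gives a deficit of $1$ against a budget of $\frac{2}{C-1}\cdot 2=2$. So no tuning of $\x_{\overline{K}}$ rescues the converse; the equivalence as written is only true ``up to constants,'' the honest converse being that failure of (\ref{eq:Grasswthmeq1}) produces some $\x_{\overline{K}}$ with $\|\x_K\|_1-\|\hat{\x}_K\|_1>\frac{1}{C}\|\x_{\overline{K}}\|_1$. Your writeup is therefore correct in what it proves, correct in flagging what it does not prove, but wrong in predicting that the remaining step is routine; and since the rest of the paper relies only on the sufficiency direction, that is the half worth keeping.
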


In fact, if (\ref{eq:Grasswthmeq1}) is satisfied, we will have the stability result

\begin{equation*}
 \|(\x-\hat{\x})_{\overline{K}}\|_1
\leq \frac{2C}{C-1} \|\x_{\overline{K}}\|_1.
\end{equation*}

In \cite{isitrobust}, it was established that when the matrix $A$ is sampled from an i.i.d. Gaussian ensemble, $C=1$, considering a single index set $K$, there exists a constant ratio $0<\mu_{W}<1$ such that if $\frac{|K|}{n} \leq \mu_{W}$, then with overwhelming probability as $n \rightarrow \infty$, the condition (\ref{eq:Grasswthmeq1}) holds for all $\w\in \mathbb{R}^n~\mbox{satisfying}~A\w=0$. Now if we take a single index set $K$ with cardinality $\frac{|K|}{n}=(1-\varpi){\mu_{W}}$, we would like to derive a characterization of $C$, as a function of $\frac{|K|}{n}=(1-\varpi){\mu_{W}}$, such that the condition (\ref{eq:Grasswthmeq1}) holds for all $\w\in \mathbb{R}^n~\mbox{satisfying}~A\w=0$. The main result of this paper is stated in the following theorem.

\begin{thm}
Assume the $m\times n$ measurement matrix $A  $ is sampled from an i.i.d. Gaussian ensemble, let $K$ be a single index set with $\frac{|K|}{n}=(1-\varpi){\mu_{W}}$, where ${\mu_{W}}$ is the weak threshold for ideally sparse signals and $\varpi$ is any real number between $0$ and $1$. We also let $\x$ be an $n$-dimensional signal vector with $\x_{K}$ being an arbitrary but fixed signal component. Then with overwhelming probability, the condition (\ref{eq:Grasswthmeq1}) holds for all $\w \in \mathbb{R}^n~\mbox{satisfying}~A\w=0$, with respect to the parameter $C=\frac{1}{\sqrt{1-\varpi}}$.

\end{thm}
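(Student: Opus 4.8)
The plan is to read (\ref{eq:Grasswthmeq1}) as a \emph{weighted $\ell_1$ recovery} condition. Regard $\x$ as the $n$-vector that equals $\x_K$ on $K$ and $0$ on $\overline{K}$, and let $W$ be the diagonal matrix with entries $1$ on $K$ and $1/C$ on $\overline{K}$. Then $\|W\x\|_1=\|\x_K\|_1$ and $\|W(\x+\w)\|_1=\|\x_K+\w_K\|_1+\frac1C\|\w_{\overline{K}}\|_1$, so (\ref{eq:Grasswthmeq1}) says exactly that $\x$ is a minimizer of $\min_{A\z=A\x}\|W\z\|_1$; equivalently, the null space of $A$ meets the cone of descent directions of $\|W\cdot\|_1$ at $\x$ only at the origin. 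Because $A$ is i.i.d. Gaussian, its null space is a uniformly random $(n-m)$-dimensional subspace, so this event can be analyzed by the Grassmann-angle / Affentranger--Schneider machinery of \cite{isitrobust,XuHassibiAllerton08}, applied now to the \emph{skewed} cross-polytope $\mathrm{SP}=W^{-1}(\{\z:\|\z\|_1\le 1\})$, i.e. the cross-polytope stretched by the factor $C$ along every $\overline{K}$ coordinate.

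Before invoking that machinery I would reduce $\x_K$ to its worst case. Replacing a zero entry of $\x_K$ by a nonzero one only enlarges the descent cone (a term $|d_i|$ is replaced by $\mathrm{sgn}(x_i)\,d_i\le|d_i|$), so it suffices to treat $\x_K$ with full support on $K$; by the sign symmetry of the Gaussian ensemble its sign pattern is irrelevant. The faces of $\mathrm{SP}$ relevant to the escape event are then the $(k-1)$-faces spanned by $K$-vertices, and the failure probability is a sum, over $\ell$ with $k\le \ell\le n$, of terms $P_\ell$ that by the standard face-counting and angle estimates decay (or grow) like $\exp\!\big(n\,\psi_{\mathrm{net}}(\nu)\big)$ with $\nu=\ell/n$ and $\psi_{\mathrm{net}}=\psi_{\mathrm{com}}+\psi_{\mathrm{ext}}-\psi_{\mathrm{int}}$, where $\psi_{\mathrm{com}}$ counts the $(\ell-1)$-faces through a given $(k-1)$-face, $\psi_{\mathrm{int}}$ is the internal-angle exponent and $\psi_{\mathrm{ext}}$ the external-angle exponent. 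The parameter $C$ enters only through the $\ell-k$ ``new'' vertices, which lie in the stretched directions; the net effect is that in the Gaussian integrals defining the internal/external angles of the larger faces one rescales the relevant variable by $1/C$, exactly as in the weighted-$\ell_1$ computation.

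The core of the argument is the identity that produces the scaling law: with $\frac{|K|}{n}=(1-\varpi){\mu_{W}}$ and $C=\frac{1}{\sqrt{1-\varpi}}$, after the change of variables that absorbs the factor $C$ into the external-angle integral, the exponent $\psi_{\mathrm{net}}(\nu)$ for the $C$-stretched polytope at sparsity $(1-\varpi){\mu_{W}}n$ coincides with the exponent for the plain cross-polytope ($C=1$) at sparsity ${\mu_{W}}n$. The latter is $\le 0$ for every $\nu\in({\mu_{W}},1]$ by the definition of the weak threshold ${\mu_{W}}$, and strictly negative once $\nu$ is bounded away from ${\mu_{W}}$. Hence every $P_\ell\to 0$ geometrically, the (at most $n$-term) sum tends to $0$, and (\ref{eq:Grasswthmeq1}) holds with overwhelming probability; Theorem~1 then yields the stability bound with $D=\frac{2C}{C-1}$.

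The main obstacle is precisely this identity: one has to carry $C$ through the closed-form expressions of \cite{XuHassibiAllerton08} for the internal and external angle exponents of the skewed cross-polytope, perform the change of variables, and check algebraically that the $(1-\varpi)$ factor in the sparsity and the $\sqrt{1-\varpi}$ factor in the weight cancel in exactly the right way --- and, crucially, that this matching is valid uniformly over all $\nu>(1-\varpi){\mu_{W}}$, not merely at the threshold. A secondary point is the behavior as $\nu\downarrow(1-\varpi){\mu_{W}}$, where $\psi_{\mathrm{net}}$ vanishes: one needs the analogue of the known fact that for the ordinary cross-polytope $\frac{d}{d\nu}\psi_{\mathrm{net}}$ is negative at the threshold, so that $\psi_{\mathrm{net}}<0$ strictly just above it and the sum over $\ell$ is genuinely controlled.
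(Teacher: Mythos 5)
Your framing is right---the event is a Grassmann angle for the $C$-skewed cross-polytope, and you have even located the correct numerical mechanism (with $C^2=\frac{1}{1-\varpi}$ and $k=(1-\varpi)\mu_W n$ one gets $C^2k=\mu_W n=k_1$, so the inner product $\frac{1}{1+C^2k}$ governing the internal angles becomes $\frac{1}{1+k_1}$, matching the unweighted polytope at the threshold sparsity $k_1$). But your argument stops exactly where the theorem begins: the ``identity that produces the scaling law'' is asserted rather than derived, and the route you propose for deriving it---carry $C$ through the exponents $\psi_{\mathrm{com}}+\psi_{\mathrm{ext}}-\psi_{\mathrm{int}}$ and match by a change of variables---does not go through as stated. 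For the $C$-weighted polytope at sparsity $k$, a contributing $(\ell-1)$-face has $\ell-k$ ``new'' vertices and there are $\binom{n-k}{\ell-k}2^{\ell-k}$ such faces, whereas the unweighted polytope at sparsity $k_1$ has $\ell-k_1$ new vertices per face and $\binom{n-k_1}{\ell-k_1}2^{\ell-k_1}$ faces. Since $k<k_1$, neither the combinatorial count nor the dimension of the spherical simplices appearing in the angle integrals lines up, so no change of variables makes the terms (or, without substantial further argument, even the exponents) coincide.

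The paper closes this gap with a device your proposal lacks: it first relaxes condition (\ref{eq:Grasswthmeq1}) by splitting $\overline{K}$ into $\overline{K_1}\cup\overline{K_2}$ with $|K\cup\overline{K_1}|=\mu_W n$ and assigning $\overline{K_1}$ a weight $C_\infty$ taken to infinity. This only increases the failure probability ($P\le P'$), and for the resulting polytope a lemma on external angles kills every face that does not contain all of $\overline{K_1}$, cutting the face count down to $\binom{n-k_1}{\ell-k_1}2^{\ell-k}$; simultaneously the internal angle of each surviving face factors as $2^{-(k_1-k)}$ times the internal angle of an unweighted face with $\ell-k_1$ new vertices, and the external angles match outright. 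The powers of two cancel against the face count and one obtains the \emph{exact} term-by-term identity $P'=P_1$, where $P_1$ is the complementary Grassmann angle of the ordinary cross-polytope at sparsity $\mu_W n$, which vanishes by the definition of $\mu_W$. No asymptotic exponent analysis, no uniformity in $\nu$, and no derivative condition at the threshold are needed. To complete your proof you would either need to import this auxiliary construction or actually carry out and verify the exponent computation you describe, including a resolution of the face-count mismatch; as written, the central step is missing.
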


\begin{proof}
When the measurement matrix $A$ is sampled from an i.i.d. Gaussian ensemble, it is known that the probability that the condition (\ref{eq:Grasswthmeq1}) holds for all $\w \in \mathbb{R}^n~\mbox{satisfying}~A\w=0$ is the \emph{Grassmann angle}, namely the probability that an $(n-m)$-dimensional uniformly distributed subspace intersects a polyhedral cone trivially (intersecting only at the apex of the cone). The complementary probability that the condition (\ref{eq:Grasswthmeq1}) does not hold for all $\w \in \mathbb{R}^n~\mbox{satisfying}~A\w=0$ is the \emph{complementary Grassmann angle}. In our problem, without loss of generality, we scale $\x_{K}$ (extended to an $n$-dimensional vector supported on $K$) to a point in the relative interior of a $(k-1)$-dimensional face $F$ of the weighted $\ell_1$ ball,
\begin{equation}
\text{SP}=\{\y\in \mathbb{R}^n~|~\|\y_K\|_1+ \|\frac{\y_{\overline{K}}}{C}\|_1 \leq 1\}.
\end{equation}
The polyhedral cone we are interested in for the complementary Grassmann angle is the cone $\text{SP}-\x_{K}$,  namely the cone obtained by setting $\x_{K}$  as the apex, and observing $\text{SP}$ from this apex.

Building on the works by Santal\"{o} \cite{Santalo1952} and
McMullen \cite{McMullen1975}  in high dimensional integral
geometry and convex polytopes, the complementary Grassmann angle for
the $(k-1)$-dimensional face $F$ can be explicitly expressed as the
sum of products of internal angles and external angles \cite{Grunbaumbook}:
\begin{equation}
P=2\times \sum_{s \geq 0}\sum_{G \in \Im_{m+1+2s}(\text{SP})}
{\beta(F,G)\gamma(G,\text{SP})}, \label{eq:Grassangformula}
\end{equation}
where $s$ is any nonnegative integer, $G$ is any
$(m+1+2s)$-dimensional face of the SP
($\Im_{m+1+2s}(\text{SP})$ is the set of all such faces),
$\beta(\cdot,\cdot)$ stands for the internal angle and
$\gamma(\cdot,\cdot)$ stands for the external angle.

The internal angles and external angles are basically defined as
follows \cite{Grunbaumbook}\cite{McMullen1975}:
\begin{itemize}
\item An internal angle $\beta(F_1, F_2)$ is the fraction of the
hypersphere $S$ covered by the cone obtained by observing the face
$F_2$ from the face $F_1$. \footnote{Note the dimension of the
hypersphere $S$ here matches the dimension of the corresponding cone
discussed. Also, the center of the hypersphere is the apex of the
corresponding cone. All these defaults also apply to the definition
of the external angles. } The internal angle $\beta(F_1, F_2)$ is
defined to be zero when $F_1 \nsubseteq F_2$ and is defined to be
one if $F_1=F_2$.
\item An external angle $\gamma(F_3, F_4)$ is the fraction of the
hypersphere $S$ covered by the cone of outward normals to the
hyperplanes supporting the face $F_4$ at the face $F_3$. The
external angle $\gamma(F_3, F_4)$ is defined to be zero when $F_3
\nsubseteq F_4$ and is defined to be one if $F_3=F_4$.

\end{itemize}

When $C=1$, we denote the probability $P$ in (\ref{eq:Grassangformula}) as $P_{1}$. By definition, the weak threshold $\mu_{W}$ is the supremum of $\frac{|K|}{n} \leq \mu_{W}$ such that the probability $P_{1}$ in (\ref{eq:Grassangformula}) goes to $0$ as $n \rightarrow \infty$. We need to show for $\frac{|K|}{n}=(1-\varpi){\mu_{W}}$ and $C=\frac{1}{\sqrt{1-\varpi}}$, (\ref{eq:Grassangformula}) also goes to $0$ as $n \rightarrow \infty$. To that end, we only need to show the probability $P'$ that, there exists an $\w$ from the null space of $A$ such that
 \begin{equation}
\|\x_K+\w_{K}\|_1+ \|\frac{\w_{\overline{K_1}}}{C_{\infty}}\|_1 +\|\frac{\w_{\overline{K_2}}}{C}\|_1 < \|\x_K\|_1
 \label{eq:Grasswthmeq2}
 \end{equation}
 goes to $0$ as $n \rightarrow \infty$, where $C_{\infty}$ is a large number which we may take as $\infty$ at the end, ${\overline{K_1}}$, ${\overline{K_2}}$ and $K$ are disjoint sets such that $|{\overline{K_1}}\bigcup {K}|=\mu_{W}n$ and ${\overline{K_1}}\bigcup {\overline{K_2}}=\overline{K}$.

Then the probability $P'$ will be equal to the probability that an $(n-m)$-dimensional uniformly distributed subspace intersects the polyhedral cone $\text{WSP}-\x_{K}$ nontrivially (intersecting at some other points besides the apex of the cone), where $\text{WSP}$ is the polytope

\begin{equation}
\text{WSP}=\{\y\in \mathbb{R}^n~|~\|\y_K\|_1+ \|\frac{\y_{\overline{K_1}}}{C_{\infty}}\|_1 +\|\frac{\y_{\overline{K_2}}}{C}\|_1 \leq 1\}.
\end{equation}

Then $P'$ is also a complementary Grassmann angle, which can be expressed by
\cite{Grunbaumbook}:
\begin{equation}
P'=2\times \sum_{s \geq 0}\sum_{G \in \Im_{m+1+2s}(\text{WSP})}
{\beta(F,G)\gamma(G,\text{WSP})}. \label{eq:Grassangformulanewp}
\end{equation}

Now we only need to show $P' \leq P_{1}$. If we denote $l=(m+1+2s)+1$ and $k=(1-\varpi)\mu_{W}n$, in the polytope $\text{WSP}$, then there are in total $\binom{n-k}{l-k} 2^{l-k}$ faces $G$ of dimension
$(l-1)$ such that $F\subseteq G$ and $\beta(F, G) \neq 0$.

However, we argue that  when $C_{\infty}$ is very large, only $\binom{n-k_1}{l-k_1} 2^{l-k}$ such faces $G$ of dimension $(l-1)$ will contribute nonzero terms to $P'$ in (\ref{eq:Grassangformulanewp}), where $k_1=\mu_{W}n$. In fact, a certain $(l-1)$-dimensional face $G$ supported on the index set $L$ is the convex hull of $C_{i} e_{i}$, where $i \in L $, $C_{i}$ is the corresponding weighting for index $i$ (which is $1$ for the set $K$, $C_{\infty}$ for the set $\overline{K_1}$ and $C$ for the set $\overline{K_2}$ ), and $e_i$ is the standard unit coordinate vector. Now we show that if $\overline{K_1} \nsubseteq L$, the corresponding term in (\ref{eq:Grassangformulanewp}) for the face $G$ will be $0$ when $C_{\infty}$ is very large.

\begin{lem}
Suppose that $F$ is a $(k-1)$-dimensional face of \text{WSP} supported on the subset $K$ with $|K|=k$. Then the external angle
$\gamma(G, \text{WSP})$ between an $(l-1)$-dimensional face $G$ supported on the set $L$($F
\subseteq G$) and the  polytope $\text{WSP}$ is $0$ when  $\overline{K_1} \nsubseteq L$ and $C_{\infty}$ is large.
\label{lemma:external}
\end{lem}

\begin{proof}
Without loss of generality, assume $K=\{n-k+1, \cdots,n\}$. Consider
the $(l-1)$-dimensional face
\begin{equation*}
G=\text{conv}\{C_{n-l+1}\times e^{n-l+1}, ... ,C_{n-k}\times e^{n-k}, e^{n-k+1},
..., e^{n}\}
\end{equation*}
of $\text{WSP}$. The $2^{n-l}$ outward
normal vectors of the supporting hyperplanes of the facets
containing $G$ are given by
\begin{equation*}
\{\sum_{p=1}^{n-l} j_{p}e_p/{C_p}+\sum_{p=n-l+1}^{n-k} e_p/C_p+
\sum_{p=n-k+1}^{n} e_p, j_{p}\in\{-1,1\}\}.
\end{equation*}

Then the outward normal cone $c(G, \text{WSP})$ at the face $G$ is
the positive hull of these normal vectors. When $\overline{K_1} \nsubseteq L$, the fraction of the surface of the $(n-l-1)$-dimensional sphere taken by the cone $c(G, \text{WSP})$ is $0$ since the corresponding $C_{p}$ is very large.
\end{proof}

Now let us look at the internal angle $\beta(F,G)$ between the
$(k-1)$-dimensional face $F$ and an $(l-1)$-dimensional face $G$, where $\overline{K_1}$ is a subset of the support set of $G$.
Notice that the only interesting case is when $F \subseteq G$ since
$\beta(F,G)\neq 0$ only if $F \subseteq G$. We will see if $F
\subseteq G$, the cone $c(F,G)$ formed by observing $G$ from $F$ is the
direct sum of a $(k-1)$-dimensional linear subspace and the positive hull of $(l-k)$ vectors. These $(l-k)$ vectors are in the form
\begin{equation*}
v_{i}=(-\frac{1}{k},...,-\frac{1}{k},0,...,C_{i},0,...0), i \in L \setminus K.
\end{equation*}
 For those vectors $v_{i}$ with $i \in \overline{K_1}$, $C_{i}=C_{\infty}$. When $C_{\infty}$ is very large, the considered cone takes half of the space at each $i$-th coordinate with $i \in \overline{K_1}$.

So by the definition of the internal angle, the internal angle $\beta(F,G)$ is equal to $\frac{1}{2^{k_1-k}} \times \beta{(F,G_1)}$, where $G_1$ is supported only on the set $L\setminus \overline{K_1}$. It is known that this internal angle $\beta{(F,G_1)}$ is equal to the fraction of an $(l-k_1-1)$-dimensional sphere taken by a polyhedral cone formed by $(l-k_{1})$ unit vectors with inner product
$\frac{1}{1+C^2k}$ between each other. In this case, the internal
angle is given by
\begin{equation}
\beta(F,G)=\frac{1}{2^{k_1-k}} \frac{V_{l-k_1-1}(\frac{1}{1+C^2k},l-k_1-1)}{V_{l-k_1-1}(S^{l-k_1-1})},
\label{eq:internal}
\end{equation}
where $V_i(S^i)$ denotes the $i$-th dimensional surface measure on
the unit sphere $S^{i}$, while $V_{i}(\alpha', i)$ denotes the
surface measure for regular spherical simplex with $(i+1)$ vertices
on the unit sphere $S^{i}$ and with inner product as $\alpha'$
between these $(i+1)$ vertices. Thus (\ref{eq:internal}) is equal to
$B(\frac{1}{1+C^2k}, l-k_1)$, where
\begin{equation}
B(\alpha', m')=\theta^{\frac{m'-1}{2}} \sqrt{(m'-1)\alpha' +1}
\pi^{-m'/2} {\alpha'}^{-1/2}J(m',\theta),
\end{equation}
with $\theta=(1-\alpha')/\alpha'$ and
\begin{equation}
 J(m', \theta)=\frac{1}{\sqrt{\pi}} \int_{-\infty}^{\infty}(\int_{0}^{\infty} e^{-\theta v^2+2i v\lambda} \,dv )^{m'} e^{-\lambda^2} \,d\lambda.
\end{equation}

If we take $C=\frac{1}{\sqrt{1-\varpi}}$, then
\begin{equation*}
\frac{1}{1+C^2k}=\frac{1}{1+k_1}.
\end{equation*}

By comparison, $\beta(F,G)=\frac{1}{2^{k_1-k}} \times \beta{(F,G)}$ is exactly the $\frac{1}{2^{k_1-k}} \beta(F_1, G_1)$ term appearing in the expression for the Grassmann angle $P$ between the face $F_1$ supported on the set $K_1$ and the polytope $\text{SP}$, where $G_1$ is an $(l-1)$-dimensional face of $\text{SP}$ supported on the set $L$.

Similar to the derivation for the internal angle, we can show that the external angle $\gamma(G, \text{WSP})$ is also exactly equal to $\gamma(G_1, \text{SP})$ term appearing in the expression for the Grassmann angle $P$ between the face $F_1$ supported on the set $K_1$ and the polytope $\text{SP}$, where $G_1$ an $(l-1)$-dimensional face of $\text{SP}$ supported on the set $L$.

Since there are in total only $\binom{n-k_1}{l-k_1} 2^{l-k}$ such faces $G$ of dimension $(l-1)$ will contribute nonzero terms to $P'$ in (\ref{eq:Grassangformulanewp}), substituting the results for the internal and external angles, we have $P=P'$. Thus for $\frac{|K|}{n}=(1-\varpi) \mu_{W}$ and $C =\frac{1}{\sqrt{1-\varpi}}$, with high probability, the condition the condition (\ref{eq:Grasswthmeq1}) holds for all $\w \in \mathbb{R}^n~\mbox{satisfying}~A\w=0$.

\end{proof}
\section{Iterative Weighted $\ell_1$ Algorithm}
\label{sec:Algorithm}
Beginning from this section, we will see how the stability result is used in analyzing the iterative reweighted $\ell_1$ minimization algorithms. We focus on the following algorithm from \cite{Icassp reweighted l_1, Isit reweighted l_1}, consisting of two $\ell_1$ minimization steps: a standard one and a weighted one. The input to the algorithm is the vector $\y=\A\x$, where $\x$ is a $k$-sparse
signal with $k=(1+\epsilon_0)\mu_W(\delta)n$, and the output is an
approximation $\x^*$ to the unknown vector $\x$. We assume that $k$,
or an upper bound on it, is known. Also $\omega>1$ is a
predetermined weight.
%
\begin{algorithm}\cite{Isit reweighted l_1}
\text{}
\begin{enumerate}
\item Solve the $\ell_1$ minimization problem:
\begin{equation}
\hat{\x} = \arg{ \min{ \|\z\|_1}}~~\text{subject to}~~ \A\z
= \A\x.
\end{equation}
\item Obtain an approximation for the support set of $\x$:
find the index set $L \subset \{1,2, ..., n\}$ which corresponds to
the largest $k$ elements of
$\hat{\x}$ in magnitude.
\item Solve the following weighted $\ell_1$ minimization problem and declare the solution as output:
 \beq
\x^* = \arg{\min\|\z_L\|_1+\omega\|\z_{\overline{L}}\|_1}~~\text{subject to}~~ \A\z
= \A\x.
\label{eq:weighted l_1}
\eeq
\end{enumerate}
\label{alg:modmain}
\end{algorithm}

The idea behind the algorithm is as follows. In the first step
we perform a standard $\ell_1$ minimization. If the sparsity of the
signal is beyond the weak threshold $\mu_W(\delta)n$, then $\ell_1$ minimization is not capable of recovering the signal. However, we can use its output to identify an index set $L$
in which most elements correspond to the nonzero elements of $\x$. We
finally perform a weighted $\ell_1$ minimization by penalizing those
entries of $\x$ that are not in $L$ because they have a
lower chance of being nonzero elements.

In the next sections we formally prove that, for certain classes of signals, Algorithm
\ref{alg:modmain} has a recovery threshold beyond that of standard
$\ell_1$ minimization. The idea of the proof is as follows. In Section
\ref{sec:robustness}, we prove that there is a large overlap between
the index set $L$, found in Step 2 of the algorithm, and the support set
of the unknown signal $\x$ (denoted by $K$)---see Theorem \ref{thm:l_1
  support recovery}. Then in Section
\ref{sec:perfect recovery}, we show that the large overlap between $K$
and $L$ can result in perfect recovery of $\x$, beyond the standard
weak threshold, when a weighted $\ell_1$ minimization is used in Step
3.

This proof idea was already used in \cite{Isit reweighted l_1} to prove a threshold improvement in recovering sparse vectors with Gaussian distributed nonzero elements by using a numerical evaluation of the robustness

\section{Approximate Support Recovery, Steps 1 and 2 of the Algorithm}
\label{sec:robustness}

In this section, we carefully study the first two steps of Algorithm
\ref{alg:modmain}. The unknown signal $\x$ is assumed to be a $k$-sparse vector with support set $K$, where
$k=|K|=(1+\epsilon_0)\mu_{W}(\delta)n$, for some $\epsilon_0>0$. 
The set $L$, as defined in the algorithm, is in fact the
$k$-support set of $\hat{\x}$. We show that for small enough
$\epsilon_0$, the intersection of  $L$ and $K$ is very large with high
probability, so that $L$ can be counted as a good
approximation to $K$.

We now lower bound $|L\cap K|$. First, we
state a general lemma that bounds $|K\cap L|$ as a function of
$\|\x-\hat{\x}\|_1$ \cite{Isit reweighted l_1}. Then, we recall an intrinsic property of
$\ell_1$ minimization called \emph{weak robustness} that provides an
upper bound on the quantity $\|\x-\hat{\x}\|_1$. 
\begin{definition}\cite{Isit reweighted l_1}
For a $k$-sparse signal $\x$, we define $W(\x,\lambda)$ to be the size of the largest subset of nonzero entries of $\x$ that has a $\ell_1$ norm less than or equal to $\lambda$.
\beq
\nonumber W(\x,\lambda):= \max\{|S|~|~S\subseteq supp(\x),~\|\x_S\|_1\leq \lambda\}
\eeq
\end{definition}
\noindent Note that $W(\x,\lambda)$ is increasing in $\lambda$.
\begin{lem} \cite{Isit reweighted l_1}
Let $\x$ be a $k$-sparse vector and $\hat{\x}$ be another vector. Also, let $K$ be the support set of $\x$ and $L$ be the $k$-support set of $\hat{\x}$. Then
\beq
|K\cap L|\geq k-W(\x,\|\x-\hat{\x}\|_1)
\eeq
\label{lem:deviation thm}
\end{lem}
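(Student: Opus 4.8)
This is a purely deterministic combinatorial lemma, so I would avoid any of the Grassmann-angle machinery and argue directly. Since $\x$ is $k$-sparse we have $|K|=k$, and by construction $|L|=k$ as well; hence $K\cap L$ and $K\setminus L$ partition $K$, so $|K\cap L|=k-|K\setminus L|$, and the claimed bound is equivalent to $|K\setminus L|\le W(\x,\|\x-\hat{\x}\|_1)$. Writing $S:=K\setminus L$, observe that $S\subseteq supp(\x)$, so by the very definition of $W(\x,\cdot)$ it is enough to prove the single inequality $\|\x_S\|_1\le\|\x-\hat{\x}\|_1$; then $|S|\le W(\x,\|\x-\hat{\x}\|_1)$ follows at once.

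To establish $\|\x_S\|_1\le\|\x-\hat{\x}\|_1$, I would use the cardinality identity $|K\setminus L|=|L\setminus K|$ (both equal $k-|K\cap L|$ because $|K|=|L|=k$) to fix a bijection $\pi:K\setminus L\to L\setminus K$. Fix $i\in K\setminus L$ and put $j=\pi(i)\in L\setminus K$. Because $L$ is the $k$-support set of $\hat{\x}$, the coordinate $j\in L$ has magnitude at least that of the coordinate $i\notin L$, so $|\hat{x}_j|\ge|\hat{x}_i|$; and since $j\notin K=supp(\x)$ we have $x_j=0$, hence $|\hat{x}_j|=|(\x-\hat{\x})_j|$. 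Combining these with the triangle inequality $|x_i|\le|\hat{x}_i|+|(\x-\hat{\x})_i|$ gives
$$|x_i|\;\le\;|(\x-\hat{\x})_j|+|(\x-\hat{\x})_i|.$$
Summing over $i\in K\setminus L$ and using that $\pi$ maps $K\setminus L$ bijectively onto $L\setminus K$, the right-hand side equals $\sum_{j\in L\setminus K}|(\x-\hat{\x})_j|+\sum_{i\in K\setminus L}|(\x-\hat{\x})_i|$; as $K\setminus L$ and $L\setminus K$ are disjoint subsets of $\{1,\dots,n\}$, this is at most $\|\x-\hat{\x}\|_1$. Therefore $\|\x_S\|_1\le\|\x-\hat{\x}\|_1$, which completes the argument.

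The only step requiring a little care is the pairing: one must invoke the exact cardinality balance $|K\setminus L|=|L\setminus K|$ to obtain the bijection, and then the defining ordering property of the $k$-support set --- every coordinate kept in $L$ has magnitude at least that of every coordinate dropped from $L$ --- to transfer the $\ell_1$ mass that $\x$ carries on its ``lost'' support indices onto the spurious indices of $\hat{\x}$, where $\x$ itself vanishes. Everything else is just the triangle inequality and disjointness bookkeeping; no probabilistic or geometric input enters this lemma.
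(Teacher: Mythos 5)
Your proof is correct. Note that the paper itself gives no proof of this lemma --- it is imported verbatim from the cited reference \cite{Isit reweighted l_1} --- so there is no in-paper argument to compare against; your write-up is a complete, self-contained replacement. The chain of reductions is sound: $|K\cap L|=k-|K\setminus L|$ since $|K|=k$; the set $S=K\setminus L$ lies in $supp(\x)$, so by the definition of $W$ as a maximum it suffices to show $\|\x_S\|_1\le\|\x-\hat{\x}\|_1$; and the pairing argument (bijection between $K\setminus L$ and $L\setminus K$ from $|K|=|L|$, the ordering property $|\hat{x}_j|\ge|\hat{x}_i|$ for $j\in L$, $i\notin L$, the vanishing of $x_j$ off $K$, the triangle inequality, and disjointness of $K\setminus L$ and $L\setminus K$) delivers exactly that. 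This is essentially the standard proof of this deterministic lemma, and you are right that no probabilistic or geometric input is needed.
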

We now review the notion of weak robustness, which
allows us to bound $\|\x-\hat{\x}\|_1$, and has the following formal
definition \cite{isitrobust}.
\begin{definition}
Let the set $S\subset\{1,2,\cdots,n\}$ and the subvector $\x_S$  be
fixed. A solution $\hat{\x}$ is called weakly robust if, for some $C >
1$ called the robustness factor, and all $\x_{\overline{S}}$, it
holds that
\beq
\|(\x-\hat{\x})_{\overline{S}}\|_1 \leq \frac{2C}{C-1}\|\x_{\overline{S}}\|_1.
\label{def1}
\eeq%
\end{definition}
\noindent The weak robustness notion allows us to bound the error
in $\|\x -\hat{\x}\|_1$ in the following way. If the matrix $\A_S$ ,
obtained by retaining only those columns of $\A$ that are
indexed by $S$, has full column rank, then the quantity
\beq
\nonumber
~\kappa = \max_{\A\w =0, \w\neq 0} \frac{\|\w_S\|_1}{\|\w_{\overline{S}}\|_1}
\eeq
must be finite, and one can write
\beq
\|\x-\hat{\x}\|_1 \leq \frac{2C(1+\kappa)}{C-1}\|\x_{\overline{S}}\|_1
\label{eq:robustness}
\eeq
\noindent From \cite{isitrobust} and the scaling law discovered in this paper, we know that for Gaussian
i.i.d. measurement matrices $\A$, $\ell_1$ minimization is weakly robust,
i.e., there exists a robustness factor
$C>1$ as a function of $\frac{|S|}{n} < \mu_{W}(\delta)$ for which
(\ref{def1}) holds. Now let $k_1 = (1-\epsilon_1)\mu_{W}(\delta)n$ for some small
$\epsilon_1>0$, and $K_1$ be the $k_1$-support set of $\x$, namely,
the set of the largest $k_1$ entries of $\x$ in magnitude. Based on
equation (\ref{eq:robustness}) we may write
\beq
\|\x-\hat{\x}\|_1 \leq \frac{2C(1+\kappa)}{C-1}\|\x_{\overline{K_1}}\|_1
\label{eq:robustness1}
\eeq
\noindent For a fixed value of $\delta$, $C$ in (\ref{eq:robustness1})
is a function of $\epsilon_1$ following the scaling law discovered in this paper, and becomes arbitrarily close to $1$ as
$\epsilon_1\rightarrow 0$. $\kappa$ is also a bounded function of
$\epsilon_1$ and therefore we may replace it with an upper bound
$\kappa^*$. We now have a bound on $\|\x-\hat{\x}\|_1$. To explore
this inequality and understand its asymptotic behavior, we
apply a third result, which is a certain concentration bound on
the order statistics of the random variables following certain amplitude distributions.

\begin{lem}
Suppose $X_1,X_2,\cdots,X_N$ are $N$ i.i.d. random variables whose amplitudes, with a mean value of $E(|X|)$, follow the probability density function $f(x)$ for $x\geq 0$. Let $S_N = \sum_{i=1}^{N}|X_i|$ and let $S_M$ be the sum of
the smallest $M$ numbers among the $|X_i|$, for each
$1\leq M \leq N$. Then for every $\epsilon>0$, as $N\rightarrow\infty$,
we have
\bea
&&\Prob(|\frac{S_N}{N} - E(|X|)| > \epsilon)\rightarrow 0, \nonumber \\
&&\Prob(|\frac{S_M}{S_N} -
\frac{1}{E(|X|)}\int_{0}^{F^{-1}(\frac{M}{N})}x f(x) dx|>\epsilon )\rightarrow 0,\nonumber
\eea
\noindent where $F(x)$ is the corresponding cumulative distribution function for the considered random variable amplitude $|X|$.
\label{lemma:Gaussian_Base}
\end{lem}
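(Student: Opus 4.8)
The plan is to deduce both statements from the weak law of large numbers (WLLN) applied to the i.i.d.\ nonnegative variables $Y_i:=|X_i|$, together with a monotone sandwiching argument for the truncated sums. Throughout I assume $E(|X|)<\infty$, and for the second statement I regard $M=M(N)$ as a sequence with $M/N\to\rho$ for some $\rho\in[0,1]$ (the main case being $0<\rho<1$; the two endpoints are handled by the same sandwich, reading $q$ below as the essential infimum or supremum of $|X|$). The first statement is immediate: since the $Y_i$ are i.i.d.\ with finite mean, the WLLN gives $S_N/N=\frac1N\sum_{i=1}^N Y_i\to E(|X|)$ in probability.

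For the second statement I would set $q:=F^{-1}(\rho)$, the $\rho$-quantile of $|X|$ (using the left-continuous generalized inverse if $F$ is not strictly increasing), and introduce for $t\ge 0$ the truncated empirical mean $H_N(t):=\frac1N\sum_{i=1}^N Y_i\,\mathbf{1}[Y_i\le t]$ together with its deterministic counterpart $h(t):=\int_0^t x f(x)\,dx$. By the WLLN, $H_N(t)\to h(t)$ in probability for each fixed $t$, and $h$ is continuous. The remaining ingredient is consistency of the $M$-th order statistic $Y_{(M)}$: for any small $\delta>0$ one has $F(q-\delta)<\rho<F(q+\delta)$, and since the empirical CDF satisfies $\widehat F_N(q\pm\delta)\to F(q\pm\delta)$ while $\widehat F_N(Y_{(M)})$ differs from $M/N\to\rho$ by at most $1/N$, the event $\{q-\delta\le Y_{(M)}\le q+\delta\}$ has probability tending to $1$. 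Since the density rules out ties, $\frac1N\sum_i Y_i\,\mathbf{1}[Y_i\le Y_{(M)}]=S_M/N$ almost surely, and monotonicity of $t\mapsto H_N(t)$ gives, on that event, the sandwich
\[
H_N(q-\delta)\ \le\ \frac{S_M}{N}\ \le\ H_N(q+\delta);
\]
letting $N\to\infty$ and then $\delta\downarrow 0$ (using continuity of $h$ at $q$) shows $S_M/N\to h(q)=\int_0^q x f(x)\,dx$ in probability.

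To conclude I would write $S_M/S_N=(S_M/N)/(S_N/N)$. The denominator converges in probability to the strictly positive constant $E(|X|)$, so Slutsky's theorem yields $S_M/S_N\to\frac{1}{E(|X|)}\int_0^q x f(x)\,dx$ in probability. Because $M/N\to\rho$ forces $F^{-1}(M/N)\to q$, the deterministic target $\frac{1}{E(|X|)}\int_0^{F^{-1}(M/N)}x f(x)\,dx$ has the same limit, hence its difference from $S_M/S_N$ tends to $0$ in probability, which is the stated claim. A version uniform over all $1\le M\le N$ follows along the same lines: $H_N$ is monotone and converges pointwise to the continuous $h$, hence uniformly by a P\'olya-type argument, and $\sup_{M}|Y_{(M)}-F^{-1}(M/N)|\to 0$ by Glivenko--Cantelli, after which the same sandwich applies simultaneously for all $M$.

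The step I expect to be the main obstacle is the control of the boundary order statistic $Y_{(M)}$ and the contribution to $S_M$ of the sample points lying between $Y_{(M)}$ and the population quantile $q$; monotonicity of $H_N$ in the truncation level is exactly what lets the sandwich absorb this, and the continuity provided by the density --- so that $h$ has no jump at $q$ and $F^{-1}$ is continuous there, or, if $f$ vanishes on an interval straddling level $\rho$, so that $h$ is flat across it --- is what forces the upper and lower sandwiching bounds to collapse to the common value.
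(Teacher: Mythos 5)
The paper states Lemma~\ref{lemma:Gaussian_Base} without any proof (it is invoked directly in Section~\ref{sec:robustness}), so there is no in-paper argument to compare against; I can only assess your proposal on its own terms, and it is correct. The first claim is indeed just the weak law of large numbers for the i.i.d.\ amplitudes (under the implicit hypothesis $E(|X|)<\infty$, which the paper also assumes tacitly). For the second claim, your sandwich of $S_M/N$ between the truncated empirical means $H_N(q-\delta)$ and $H_N(q+\delta)$, justified by consistency of the order statistic $Y_{(M)}$ and by monotonicity of $H_N$ in the truncation level, followed by Slutsky's theorem for the ratio $S_M/S_N$, is the standard and essentially optimal route for this kind of trimmed-sum limit; you also correctly flag the only delicate points, namely the behaviour at a quantile where $F$ is flat (harmless, since $h$ is flat there too) and the need to interpret $M$ as a sequence with $M/N$ convergent --- which is exactly how the lemma is used in the paper, where $M/N$ is the fixed fraction $\frac{\epsilon_0+\epsilon_1}{1+\epsilon_0}$. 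No gaps.
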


\noindent Without loss of generality, we assume $E(|X|)=1$. As a direct consequence of Lemma \ref{lemma:Gaussian_Base} we can write:
\beq
\Prob(|\frac{\|\x_{\overline{K_1}}\|_1}{\|\x\|_1} - \int_{0}^{F^{-1}(\frac{\epsilon_0+\epsilon_1}{1+\epsilon_{0}})} xf(x) dx |>\epsilon)\rightarrow0
\label{eq:kbar bound}
\eeq
\noindent for all $\epsilon>0$ as $n\rightarrow\infty$. Define
\beq
\nonumber \zeta(\epsilon_0):= \inf_{\epsilon_1>0} \frac{2C(1+\kappa^*)}{C-1} \int_{0}^{F^{-1}(\frac{\epsilon_0+\epsilon_1}{1+\epsilon_{0}})} xf(x) dx >\epsilon
\eeq

\noindent Combining (\ref{eq:robustness1}) with (\ref{eq:kbar bound}) we can get
\beq
\Prob(\frac{\|\x-\hat{\x}\|_1}{\|\x\|_1} - \zeta(\epsilon_0) < \epsilon)\rightarrow 1
\label{eq:robustness2}
\eeq
\noindent for all $\epsilon>0$ as $n\rightarrow\infty$. In summary, we have showed that
$|K\cap L|\geq k-W(\x,\|\x-\hat{\x}\|_1)$, and then ``weak robustness'' of $\ell_1$ minimization guarantees that for large $n$ with high probability $\|\x-\hat{\x}\|_1 \leq \zeta(\epsilon_0)\|\x\|_1$. These results will further lead to the main claim on the support recovery, which extends a similar claim in \cite{Isit reweighted l_1} by using the closed-form scaling law result in this paper.
\begin{thm}[Support Recovery]
Let $\A$ be an i.i.d. Gaussian $m\times n$ measurement matrix with
$\frac{m}{n}=\delta$. Let $k=(1+\epsilon_0)\mu_{W}(\delta)$ and $\x$
be an $n\times1$ random $k$-sparse vector whose nonzero element amplitude follows the distribution of $f(x)$. Suppose that
$\hat{\x}$ is the approximation to $\x$ given by the $\ell_1$ minimization, namely $\hat{\x}=argmin_{\A\z=\A\x}\|\z\|_1$. Then, for any $\epsilon_{0}>0$ and for all $\epsilon>0$, as $n\rightarrow\infty$,
\beq
\small
\Prob(\frac{|supp(\x) \cap supp_k(\hat{\x})|}{k} -(1-F(y^*))
>-\epsilon)\rightarrow 1,
\label{eq:support recovery}
\eeq
where $y^*$ is the solution to $y$ in the equation $\int_{0}^{y} xf(x) dx=\zeta(\epsilon_0)$.

Moreover, if the integer $t\geq 0$ is the smallest integer for which the amplitude distribution $f(x)$ has a nonzero $t$-th order derive at the origin, namely $f^{(t)}(0) \neq 0$, then as $\epsilon_{0} \rightarrow 0$, with high probability,
\beq
\small
\frac{|supp(\x) \cap supp_k(\hat{\x})|}{k}= 1-O(\epsilon_{0}^{\frac{1}{t+2}}).
\eeq
\label{thm:l_1 support recovery}
\end{thm}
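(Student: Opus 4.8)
The plan is to chain together the three ingredients already prepared in this section — Lemma~\ref{lem:deviation thm}, the weak-robustness error bound (\ref{eq:robustness2}), and the order-statistics concentration of Lemma~\ref{lemma:Gaussian_Base} — and then to carry out an asymptotic analysis of the resulting estimate as $\epsilon_0\to 0$. Since by definition $supp_k(\hat\x)=L$ and $supp(\x)=K$, Lemma~\ref{lem:deviation thm} already gives $|supp(\x)\cap supp_k(\hat\x)|\ge k-W(\x,\|\x-\hat\x\|_1)$, so the whole task reduces to proving $W(\x,\|\x-\hat\x\|_1)/k\le F(y^*)+\epsilon$ with probability tending to $1$.

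For that, I would first observe that the largest subset of nonzero coordinates fitting into an $\ell_1$ budget $\lambda$ consists of the smallest-amplitude coordinates, so $W(\x,\lambda)=\max\{M:S_M\le\lambda\}$ with $S_M$ the sum of the $M$ smallest of the $k$ nonzero amplitudes. Applying Lemma~\ref{lemma:Gaussian_Base} with $N=k$ (and $E|X|=1$, so $\|\x\|_1=S_k$ concentrates at $k$), the ratio $S_M/S_k$ concentrates at $\int_0^{F^{-1}(M/k)}xf(x)\,dx$. Comparing, for the fixed target fraction $F(y^*)+\eta$, the sum of the smallest $\lceil (F(y^*)+\eta)k\rceil$ amplitudes against $\|\x-\hat\x\|_1$, and using strict monotonicity of $y\mapsto\int_0^y xf(x)\,dx$ together with $\|\x-\hat\x\|_1/\|\x\|_1\le\zeta(\epsilon_0)+\eta$ from (\ref{eq:robustness2}) and the defining relation $\int_0^{y^*}xf(x)\,dx=\zeta(\epsilon_0)$, a union bound over the finitely many ``$\eta$'' events gives $W(\x,\|\x-\hat\x\|_1)/k\le F(y^*)+\epsilon$ with probability $\to1$, which is (\ref{eq:support recovery}).

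It then remains to estimate $F(y^*)$ as $\epsilon_0\to0$, and the crux is the decay rate of $\zeta(\epsilon_0)$. Invoking the scaling law of Section~\ref{sec:scaling} with index-set ratio $(1-\epsilon_1)\mu_W$ (so $\varpi=\epsilon_1$ and $C=1/\sqrt{1-\epsilon_1}$) gives $2C/(C-1)=\Theta(1/\epsilon_1)$ as $\epsilon_1\to0$, while $\kappa^*$ stays bounded; and since $f^{(j)}(0)=0$ for $j<t$ and $f^{(t)}(0)\neq0$, Taylor's theorem gives $F(y)=\Theta(y^{t+1})$ and $\int_0^y xf(x)\,dx=\Theta(y^{t+2})$ near the origin, hence $\int_0^{F^{-1}(u)}xf(x)\,dx=\Theta(u^{(t+2)/(t+1)})$. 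Substituting these into the infimum defining $\zeta$ and optimizing — the balance $\epsilon_1=\Theta(\epsilon_0)$ between the blow-up $1/\epsilon_1$ of the robustness factor and the smallness $\Theta((\epsilon_0+\epsilon_1)^{(t+2)/(t+1)})$ of the tail integral — yields $\zeta(\epsilon_0)=\Theta(\epsilon_0^{1/(t+1)})$. Inverting $\int_0^{y^*}xf(x)\,dx=\zeta(\epsilon_0)=\Theta((y^*)^{t+2})$ gives $y^*=\Theta(\epsilon_0^{1/((t+1)(t+2))})$, whence $F(y^*)=\Theta((y^*)^{t+1})=\Theta(\epsilon_0^{1/(t+2)})$; combined with the first part (and since the ratio is always $\le1$) this gives $|supp(\x)\cap supp_k(\hat\x)|/k=1-O(\epsilon_0^{1/(t+2)})$.

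The main obstacle is twofold. First, passing from the fixed-$M$ statement of Lemma~\ref{lemma:Gaussian_Base} to the control of the random quantity $W(\x,\cdot)$ requires the monotonicity/comparison argument above and careful chaining of the several ``probability $\to1$'' events, keeping track of the slack parameters $\eta$ before letting them shrink. Second — and this is the genuine difficulty — the asymptotic optimization of the infimum defining $\zeta(\epsilon_0)$: one must verify that the $\epsilon_1=\Theta(\epsilon_0)$ trade-off is order-optimal and, more delicately, that the $\Theta(\cdot)$ estimates from the Taylor expansion of $f$ at $0$ remain valid after composition with $F^{-1}$ and after the infimum is taken. Everything else is substitution and bookkeeping.
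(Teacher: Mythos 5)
Your proposal follows exactly the route the paper intends: chaining Lemma~\ref{lem:deviation thm}, the weak-robustness bound (\ref{eq:robustness2}) with the robustness factor $C=1/\sqrt{1-\epsilon_1}$ from the scaling law, and the order-statistics concentration of Lemma~\ref{lemma:Gaussian_Base}, and your asymptotic optimization of $\zeta(\epsilon_0)$ (balancing $\epsilon_1=\Theta(\epsilon_0)$ against the $\Theta((\epsilon_0+\epsilon_1)^{(t+2)/(t+1)})$ tail integral to get $\zeta=\Theta(\epsilon_0^{1/(t+1)})$ and hence $F(y^*)=\Theta(\epsilon_0^{1/(t+2)})$) correctly recovers the stated exponent. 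The paper only sketches this argument, so your write-up is a faithful and correct elaboration of the same approach.
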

The proof of Theorem \ref{thm:l_1 support recovery} relies on the scaling law for recovery stability in this paper and concentration Lemma \ref{lemma:Gaussian_Base}. Note that if $\epsilon_0\rightarrow0$, then  Theorem \ref{thm:l_1 support recovery} implies that $\frac{|K\cap L|}{k}$ becomes arbitrarily close to 1. We can also see that the support recovery is better when the probability distribution function of $f(x)$ has a lower order of nonzero derivative.  This is consistent with the better recovery performance observed for such distributions in simulations of the iterative reweighted $\ell_{1}$ minimization algorithms.

 \section{Perfect Recovery, Step 3 of the Algorithm}
 \label{sec:perfect recovery}
 In Section \ref{sec:robustness} we showed that. if
 $\epsilon_0$ is small, the $k$-support of $\hat{\x}$, namely
 $L=supp_k(\hat{\x})$, has a significant overlap with the true support of
 $\x$. The scaling law gives a quantitative lower bound on the size of this overlap
 in Theorem \ref{thm:l_1 support recovery}. In Step 3 of
 Algorithm \ref{alg:modmain}, weighted $\ell_1$ minimization is used,
 where the entries in  $\overline{L}$ are assigned a higher weight
 than those in $L$. In \cite{isitweighted}, we have been able to
 analyze the performance of such weighted $\ell_1$ minimization
 algorithms. The idea is that if a sparse vector $\x$ can
 be partitioned into two sets $L$ and $\overline{L}$, where in one set
 the fraction of non-zeros is much larger than in the other set, then
 (\ref{eq:weighted l_1}) can potentially increase the recovery threshold of $\ell_{1}$ minimization.
\begin{thm} \cite{isitweighted}
Let $L\subset \{1,2,\cdots,n\}$ , $\omega>1$ and the fractions
$f_1,f_2\in[0,1]$ be given. Let $\gamma_1 = \frac{|L|}{n}$ and
$\gamma_2=1-\gamma_1$. There exists a threshold
$\delta_c(\gamma_1,\gamma_2,f_1,f_2,\omega)$ such that with high
probability, almost all random sparse vectors $\x$ with \emph{at least}
$f_1\gamma_1n$ nonzero entries over the set $L$, and \emph{at most}
$f_2\gamma_2n$ nonzero entries over the set $\overline{L}$ can be
perfectly recovered using
$\min_{\A\z=\A\x}\|\z_L\|_1+\omega\|\z_{\overline{L}}\|_1$, where $\A$
is a $\delta_cn\times n$ matrix with i.i.d. Gaussian entries. Furthermore, for appropriate $\omega$,
\[
\mu_W(\delta_c(\gamma_1,\gamma_2,f_1,f_2,\omega))<f_1\gamma_1+f_2\gamma_2,
\]
i.e., standard $\ell_1$ minimization using a $\delta_cn\times n$
measurement matrix with i.i.d. Gaussian entries cannot recover such $x$.
\label{thm:delta}
\end{thm}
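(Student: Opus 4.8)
The plan is to recast perfect recovery as a deterministic null-space condition for the weighted $\ell_1$ program, bound the probability that this condition holds when the null space is a uniformly random $(n-m)$-dimensional subspace (as it is for i.i.d.\ Gaussian $\A$) using the same Grassmann-angle machinery as in the scaling-law proof above, and then choose the weight $\omega$ so that the resulting threshold $\delta_c$ falls below the plain weak threshold. First I would fix the combinatorial support profile: write $S$ for the support of $\x$, with $|S\cap L|=k_1\approx f_1\gamma_1 n$ and $|S\cap\overline L|=k_2\approx f_2\gamma_2 n$. A routine convexity argument (the weighted analogue of the null-space characterization given above for ordinary $\ell_1$) shows that $\x$ is the unique minimizer of $\min_{\A\z=\A\x}\|\z_L\|_1+\omega\|\z_{\overline L}\|_1$ precisely when, for every nonzero $\w$ with $\A\w=0$,
\[
\sum_{i\in S\cap L}|\w_i|+\omega\!\!\sum_{i\in S\cap\overline L}\!\!|\w_i|\;<\;\sum_{i\in\overline S\cap L}|\w_i|+\omega\!\!\sum_{i\in\overline S\cap\overline L}\!\!|\w_i|.
\]
The quantities below depend only on this combinatorial type, not on the signs or magnitudes of $\x$, and since we only need the statement for \emph{almost all} such $\x$ it suffices to show the failure probability of this condition tends to $0$ for the typical profile. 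That failure probability is exactly the complementary Grassmann angle of the face $F$ of the weighted cross-polytope $\mathrm{WSP}=\{\y:\|\y_L\|_1+\omega\|\y_{\overline L}\|_1\le1\}$ corresponding to $\x$, hence is a sum of the type in (\ref{eq:Grassangformula}), $P=2\sum_{s\ge0}\sum_{G}\beta(F,G)\gamma(G,\mathrm{WSP})$.

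Next I would evaluate the angles. As in the derivation leading to (\ref{eq:internal}), the internal angles $\beta(F,G)$ reduce to surface measures of regular spherical simplices, but the pairwise inner products between the generators now depend on how many of the extra generators of $G$ sit on $L$ (weight $1$) versus on $\overline L$ (weight $\omega$); the external angles $\gamma(G,\mathrm{WSP})$ reduce to Gaussian tail integrals over outward-normal cones whose widths are rescaled by $\omega$. Substituting the closed forms $B(\cdot,\cdot)$ and $J(\cdot,\cdot)$ into the double sum and applying Laplace's method — turning the combinatorial sum over $s$, $k_1$ and $k_2$ into an exponential-order maximization in $n$ — yields a growth exponent $\psi(\delta,\gamma_1,\gamma_2,f_1,f_2,\omega)$ of $P$. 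Defining $\delta_c$ as the infimum of $\delta$ for which $\psi<0$, the failure probability decays exponentially whenever $\delta>\delta_c$, so weighted $\ell_1$ recovers almost all signals of the prescribed profile with high probability.

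For the ``furthermore'' claim I would compare $\delta_c$ with the plain weak threshold. Setting $\omega=1$ makes the program ordinary $\ell_1$, which (by exchangeability of coordinates, so that only the total number of nonzeros matters) recovers such signals in the weak sense exactly when $f_1\gamma_1+f_2\gamma_2<\mu_W(\delta)$; hence $\delta_c|_{\omega=1}$ solves $\mu_W(\delta)=f_1\gamma_1+f_2\gamma_2$. I expect $\partial\psi/\partial\omega<0$ at $\omega=1$ whenever $f_2<f_1$ — heuristically, raising the penalty on the sparser block $\overline L$ costs only on its $f_2$-fraction of nonzeros but gains on its $(1-f_2)$-fraction of free coordinates — so $\partial\delta_c/\partial\omega<0$ there, and a small choice $\omega>1$ gives $\mu_W(\delta_c)<f_1\gamma_1+f_2\gamma_2$ strictly, which is the assertion.

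The main obstacle is the angle asymptotics for the weighted polytope: carefully tracking which generators carry which weight inside the spherical-simplex inner products and inside the external-angle integrals, and then pushing the Laplace estimate through uniformly in $s$ so that the dominant term is correctly identified and its exponent controls the whole series — the analogue of, but more intricate than, the computation in the proof above. The second delicate point is establishing the sign of $\partial\psi/\partial\omega$ at $\omega=1$, since this is the step that actually delivers the strict improvement over plain $\ell_1$ rather than merely matching it.
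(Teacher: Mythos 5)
This theorem is not proved in the paper at all: it is imported verbatim from the reference \cite{isitweighted} (Khajehnejad, Xu, Avestimehr, Hassibi), and the paper only points to that work and to a software package for computing the thresholds $\delta_c$. So there is no in-paper proof to match your proposal against; what can be said is that your plan is a faithful reconstruction of the strategy used in that reference --- null-space characterization of weighted $\ell_1$ success, complementary Grassmann angle of a face of the weighted cross-polytope, internal/external angle formulas, and Laplace-type exponent asymptotics defining $\delta_c$.

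Two substantive caveats. First, your null-space condition with $\sum_{i\in S}|\w_i|$-type terms on the support side is the condition for recovery of \emph{every} sign pattern on that support (a strong-type condition), not an ``if and only if'' for a fixed $\x$; the exact characterization is sign-dependent ($-\sum_{i\in S}\mathrm{sgn}(x_i)w_i$ weighted appropriately, plus the weighted absolute sum off the support). For the weak statement this is harmless because the Gaussian ensemble is sign-symmetric, but the ``precisely when'' as written is wrong. Second, and more importantly, the ``furthermore'' clause --- the strict inequality $\mu_W(\delta_c)<f_1\gamma_1+f_2\gamma_2$ for appropriate $\omega$ --- is the actual content of the theorem, and your argument for it is only a heuristic sign claim about $\partial\psi/\partial\omega$ at $\omega=1$ that you do not establish; in the cited reference this comparison is carried out by explicitly computing and numerically evaluating the competing exponents rather than by a monotonicity argument, which is why the paper points to a software package. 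As a proof plan your proposal identifies the right machinery, but both of the steps you flag as ``obstacles'' are precisely where the theorem lives, so the proposal should be read as an outline rather than a proof.
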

A software package for computing such thresholds can also be found in
\cite{sotware link}. We then summarize the threshold improvement result in the following theorem, with the detailed proofs omitted due to limited space.

\begin{thm}[Perfect Recovery]
Let $\A$ be an $m\times n$ i.i.d. Gaussian matrix with
$\frac{m}{n}=\delta$. If $\delta_c(\mu_{W}(\delta),1-\mu_{W}(\delta),1,0,\omega) < \delta$,
then there exist $\epsilon_0>0$ and $\omega>0$ such that, with high probability as $n$ grows to infinity, Algorithm
\ref{alg:modmain} perfectly recovers a random
$(1+\epsilon_0)\mu_{W}(\delta)n$-sparse vector with i.i.d. nonzero
entries following an amplitude distribution whose pdf has a nonzero derive of some finite order at the origin.
\label{thm: final thm}
\end{thm}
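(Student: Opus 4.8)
The plan is to chain the two results already established: the support-recovery guarantee of Theorem~\ref{thm:l_1 support recovery} governs Steps~1--2 of Algorithm~\ref{alg:modmain}, the weighted-$\ell_1$ threshold of Theorem~\ref{thm:delta} governs Step~3, and a continuity argument in $\epsilon_0$ joins them. First run Steps~1 and~2. Because the amplitude pdf $f$ has a nonzero derivative of some finite order $t$ at the origin, the quantity $\zeta(\epsilon_0)$ built from the scaling law $C=1/\sqrt{1-\varpi}$ and the concentration Lemma~\ref{lemma:Gaussian_Base} tends to $0$ as $\epsilon_0\to 0$; hence $y^\ast\to 0$ and $1-F(y^\ast)\to 1$, and Theorem~\ref{thm:l_1 support recovery} gives that with probability tending to $1$ the set $L=supp_k(\hat{\x})$ satisfies $|K\cap L|\ge \bigl(1-O(\epsilon_0^{1/(t+2)})\bigr)k$, where $K=supp(\x)$ and $|K|=|L|=k=(1+\epsilon_0)\mu_{W}(\delta)n$.

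Next, translate this overlap into the parameters of Theorem~\ref{thm:delta}. Put $\gamma_1=|L|/n=(1+\epsilon_0)\mu_{W}(\delta)$ and $\gamma_2=1-\gamma_1$; the number of nonzeros of $\x$ lying in $L$ is $|K\cap L|$, which we record as $f_1\gamma_1 n$ with $f_1\to 1$ as $\epsilon_0\to 0$, while the number lying in $\overline{L}$ is $k-|K\cap L|=f_2\gamma_2 n$ with $f_2\to 0$. Under the standing hypothesis, fix a weight $\omega$ with $\delta_c(\mu_{W}(\delta),1-\mu_{W}(\delta),1,0,\omega)<\delta$. By continuity of the threshold map $\delta_c(\gamma_1,\gamma_2,f_1,f_2,\omega)$ in its arguments, $\delta_c(\gamma_1,\gamma_2,f_1,f_2,\omega)\to\delta_c(\mu_{W}(\delta),1-\mu_{W}(\delta),1,0,\omega)<\delta$, so there is $\epsilon_0>0$ for which $\delta_c(\gamma_1,\gamma_2,f_1,f_2,\omega)<\delta$ still holds. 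Since the measurement count $m=\delta n$ exceeds $\delta_c(\gamma_1,\gamma_2,f_1,f_2,\omega)n$ and the weighted-$\ell_1$ recovery region only grows with the number of measurements, Theorem~\ref{thm:delta} applies to the weighted program~(\ref{eq:weighted l_1}) in Step~3 and yields exact recovery of $\x$ with probability tending to $1$.

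The delicate point, and the reason the complete argument needs more care than this sketch, is that the set $L$ returned by Step~2 is random and statistically coupled to $\A$ through $\hat{\x}$, whereas Theorem~\ref{thm:delta} fixes $L$ in advance. To handle this I would condition on the support $K$ and the sign pattern of $\x$ (legitimate by exchangeability, the nonzero amplitudes being i.i.d.\ from $f$), regard the remaining randomness as that of the uniformly distributed null space of $\A$, and bound the event ``Step~3 fails'' by the union of ``Step~3 fails for some admissible $L'$'' (those $L'$ with $|L'|=k$ meeting the overlap bound of Step~1) and ``Step~2 returns a non-admissible $L$''. The second event vanishes by Theorem~\ref{thm:l_1 support recovery}; the first is a union over at most $\binom{n}{k}$ sets, so it needs the weighted-$\ell_1$ failure probability at $m=\delta n>\delta_c n$ to be exponentially small at a rate beating this combinatorial factor---exactly the overwhelming-probability estimate that the Grassmann-angle analysis behind Theorem~\ref{thm:delta} supplies, the strict slack $\delta-\delta_c>0$ being what makes that exponent usable. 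Assembling these estimates, together with replacing $\kappa$ by its uniform bound $\kappa^\ast$ from Section~\ref{sec:robustness} and taking $\omega$ as above, completes the proof. The main obstacle is therefore not any single inequality but this decoupling/union-bound step reconciling the random $L$ with the fixed-$L$ statement of Theorem~\ref{thm:delta}.
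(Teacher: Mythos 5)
Your proposal follows exactly the route the paper intends (and explicitly outlines at the end of Section \ref{sec:Algorithm}): use Theorem \ref{thm:l_1 support recovery} to get $f_1\to 1$, $f_2\to 0$ as $\epsilon_0\to 0$, then invoke continuity of $\delta_c$ in Theorem \ref{thm:delta} to pick $\epsilon_0>0$ with $\delta_c(\gamma_1,\gamma_2,f_1,f_2,\omega)<\delta$; the paper omits the detailed proof, so your write-up is if anything more explicit than the source. The one point to be careful about is the decoupling step you yourself flag: Theorem \ref{thm:delta} as stated is a weak-threshold result for a \emph{fixed} partition $L$, and your union bound over all admissible $L'$ requires an exponential decay rate for the failure probability at $\delta>\delta_c$ that beats the entropy of the set of admissible $L'$ --- plausible because that entropy is $O\bigl(\epsilon_0^{1/(t+2)}\log(1/\epsilon_0)\bigr)n$ and hence small for small $\epsilon_0$, but not something Theorem \ref{thm:delta} literally supplies, so a quantitative Grassmann-angle decay estimate would have to be imported to close that step rigorously.
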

\section{Acknowledgement}
 We thank A. Khajehnejad, Babak Hassibi and A. Avestimehr for the helpful discussions. The research is supported by NSF under CCF-0835706.

\end{document}